\newtheorem{thm}{Theorem}[section]
\newtheorem{lem}[thm]{Lemma}
\newtheorem{prop}[thm]{Proposition}
\theoremstyle{definition}
\newtheorem*{criterion}{Criterion}
\theoremstyle{remark}
\numberwithin{equation}{section}
\renewcommand{\phi}{\varphi}
\begin{document}

\title{On Einstein Algebras and Relativistic Spacetimes\thanks{This material is based upon work supported by the National Science Foundation under Grant Nos. 1331126 (Rosenstock \& Weatherall) and DGE 1148900 (Barrett).  The authors can be reached at rosensts@uci.edu, thomaswb@princeton.edu, and weatherj@uci.edu.}}
\author{Sarita Rosenstock \\ Thomas William Barrett\\ James Owen Weatherall}%
\date{}
\maketitle
\begin{abstract}
In this paper, we examine the relationship between general relativity and the theory of Einstein algebras. We show that according to a formal criterion for theoretical equivalence recently proposed by \cite{halvorson2012, halvorson2015} and \cite{WeatherallTE}, the two are equivalent theories.
\end{abstract}

\section{Introduction}\label{introduction}

For much of the Carter and Reagan administrations, John Earman argued that Leibniz's view on spacetime could be made precise by appealing to structures that he called \emph{Leibniz algebras}.\footnote{See especially \citet{Earman1977}, though he also discussed the proposal elsewhere \citep{Earman1978, Earman1979, Earman1986, Earman1989, EarmanWEST}.}  On this proposal, rather than representing spacetime as a manifold with some fields on it, one instead begins with an algebraic structure representing possible configurations of matter.  One then shows that, if this algebraic structure is suitably defined, one can use it to reconstruct a manifold with appropriate fields.  Thus, one has a sense in which spacetime arises as a representation of underlying facts about (algebraic) relations between states of matter, which is at least strongly suggestive of Leibniz's remarks in the correspondence with Clarke.\footnote{For the purposes of what follows, we remain agnostic about the historical question of whether this is an adequate way of reconstructing Leibniz's views.  In particular, see \citet{Stein1977} for an importantly different perspective on Leibniz's views.  That said, history aside, we do think this proposal is a plausible way of understanding what ``relationism'' could mean---though Earman himself argued that the Leibniz algebra approach is a kind of ``substantivalism'', albeit at a lower level.}

Later, Earman argued that analogous algebraic structures---what \citet{GerochEA} had previously called \emph{Einstein algebras}---might provide the mathematical setting for an appropriately ``relationist'' approach to general relativity \citep{Earman1986, EarmanWEST}.  The motivation for this proposal was the hole argument, as redeployed by \citet{Stachel1989} and \citet{Earman+Norton}.\footnote{For more on the hole argument, see \citet{Pooley} and \citet{NortonSEP} and references therein.  One sees versions of both the hole argument and the Einstein algebra proposal in \citet{Earman1986}.}  It seems Earman hoped (indeed, argued) that a single Einstein algebra would correspond to an equivalence class of relativistic spacetimes related by the isometries arising in the hole argument.  One might then take the Einstein algebra formalism to do away with ``excess structure'' allegedly appearing in the formalism of relativistic spacetimes---that is, the formalism of manifolds and tensor fields---by equivocating between what Earman and Norton described as ``Leibniz equivalent'' spacetimes.\footnote{For an independent argument that it is misleading to think of the standard formalism of relativity theory as having excess structure, see \citet{WeatherallHoleArg} and \citet{WeatherallGauge}.}  This program ran aground, however, when \citet{Rynasiewicz} observed that one can define points in algebraic approaches to topological spaces, and argued that this meant one could recover precisely the isometries used in the hole argument as isomorphisms between Einstein algebras.  Thus it seemed Einstein algebras could not do the work Earman hoped.

We think Rynasiewicz had this essentially right.  But there is more to say about the relationship between Einstein algebras and relativistic spacetimes.\footnote{\citet{bain2003} also takes this topic back up, though with somewhat different concerns.}  There is a sense in which these are equivalent theories, according to a standard of equivalence recently discussed by \citet{halvorson2012, halvorson2015} and \citet{WeatherallTE}.\footnote{See \citet{barrett2015a} for a discussion of the relationship between this sense of equivalence and others; for a discussion of related notions, see \citet[\S 6.4]{andreka2002}.  For an overview of work on this notion of equivalence, see \citet{WeatherallLandry}.}  More precisely we will show that, on natural definitions of the appropriate categories, the category of Einstein algebras is dual to the category of relativistic spacetimes, where the functors realizing the equivalence preserve the empirical structure of the theories.

Our interest in this problem extends beyond dead horse flagellation, for two reasons.  One is that there has been a flurry of recent work on what it means to say that two theories are equivalent.\footnote{For instance, in addition to the papers already cited, see \citet{North}, \citet{Curiel}, \citet{swansonhalvorsonunpublished}, \citet{BarrettStructure}, \citet{coffey2014}, \citet{halvorson2013}, \citet{glymour2013}, \citet{vanfraassen2014}, \citet{andreka2005},  \citet{barrett2014a}, and \citet{Rosenstock+Weatherall1, Rosenstock+Weatherall2}; of course, there are also some classics on the topic, such as \citet{GlymourTETR, glymour1977, glymour1980}, \citet{quine1975}, and \citet{sklar1982}.}  Here we provide a novel example of two theories that are equivalent by one, but not all, of the standards of equivalence on the table.\footnote{For instance, it is hard to see how these theories could be equivalent by Glymour's criterion of equivalence, since it is not clear how to make sense of ``covariant definability'' when one of the theories is not formulated in terms of fields on a manifold.}  That said, we do not intend to argue that this sense of equivalence is the only one that matters, or even the best one; it is, however, a clear and precise criterion of equivalence between theories, and it seems valuable to explore how it rules on examples of interest as a way of better understanding its significance. The present example is a datum for this ongoing debate.

Indeed, in this regard the example is of particular interest. On reason is that it differs from other examples that have been explored by, for instance, \cite{WeatherallTE} and \citet{Rosenstock+Weatherall1}, in that the categories in question are dual, rather than equivalent.  Another is that the apparent differences between the theories in question---one appears to offer a direct characterization of spatiotemporal geometry; the other uses primarily algebraic methods to described relationships between possible states of matter, from which geometry is reconstructed---are of a different character from other examples in the literature, which may be of probative value as we try to evaluate the merits of the criterion of equivalence.  On this last note, it is striking that the sense of equivalence we discuss here is lurking in the background of the argument \citet{Rynasiewicz} made decades ago, in a different setting, long before any discussion of using methods from category theory to explore equivalence of (scientific) theories;\footnote{That is, the relationship between topological spaces and rings of functions that Rynasiewicz relies on in his paper is precisely a well-known categorical duality, so in a sense, we are following through on a promissory note that the same relationship carries over to relativistic spacetimes and Einstein algebras.} as debates concerning theoretical equivalence continue, it seems important that categorical equivalence has already played an important, even dispositive, role in the philosophy of physics literature.

The second reason the point is worth making is that, as Earman has suggested, the Leibniz/Einstein algebraic formalism does seem to capture a relationist intuition about how spacetime structure relates to possible configurations of matter.\footnote{To be fair, it would not be right to say that Earman simply identified the Leibniz/Einstein algebra formalism with relationism.  For instance, although he suggests that Leibniz algebras are a ``natural candidate'' for models of spacetime answering to Leibniz's desiderata \citep[p. 269]{Earman1979}, he ultimately concludes that they are not fully ``relationist'': as he later puts it, Leibniz/Einstein algebras are ``first-degree non-substantivalist'', yet substantivalist on a deeper level \citep{Earman1986}.  But it is not our purpose to argue about what really constitutes ``relationism''.  We take Earman to be suggesting---and we agree---that the Leibniz/Einstein algebra approach offers a clear and compelling way of capturing the idea that spatio-temporal structure may be reconstructed from relations between possible states of matter.  (Indeed, we do not know of any better way of making this idea precise.) If this is not relationism, but rather some (putative) third way, so be it; as far as we are concerned, that does not affect the interest of the results.  We are grateful to an anonymous referee for pushing us on this point.}   In the background of such arguments is the idea that the standard formalism of relativistic spacetimes is somehow ``substantivalist''.  The results presented here might then be taken as a way of making precise the idea that a suitably ``relationist'' theory, once one spells it out in sufficient detail, is equivalent to the ``substantivalist'' theory, properly construed.\footnote{None of this should be taken to endorse the idea that the standard formalism somehow pushes one to ``substantivalism''.  To the contrary, \citet{WeatherallHoleArg}, following \citet{Stachel1993}, argues that to recover some of the claims \citet{Earman+Norton} attribute to the ``(manifold) substantivalist,'' one needs to \emph{add} structure to the standard formalism---effectively by choosing a global labeling system for spacetime points.}  In particular, the sense of equivalence we consider here is often taken as a standard for determining when two mathematical theories have the same structure.\footnote{See \citet{Baez+etal} and \citet{BarrettCounting}.}  And so, the equivalence we discuss suggests that a relationist theory of general relativity, expressed in these terms, has precisely as much structure as the standard theory.

The remainder of the paper will proceed as follows.  We will begin with some background and mathematical preliminaries concerning general relativity and the notion of equivalence we will consider in what follows.  We will then review the theory of smooth algebras and describe their relationship to smooth manifolds.  Finally, we will define Einstein algebras and prove our main result.  We will conclude with some further remarks on the significance of the result.

\section{General relativity and categorical duality}

In general relativity, we represent possible universes using \emph{relativistic spacetimes}, which are Lorentzian manifolds $(M,g)$, where $M$ is a smooth four dimensional manifold, and $g$ is a smooth Lorentzian metric.\footnote{Here and throughout we assume that smooth manifolds are Hausdorff and paracompact.  For more details on relativistic spacetimes, see \citet{MalamentGR} and \citet{Wald}.  One often requires $M$ to be connected; here we do not make that assumption.  Alternatively, one could require $M$ to be connected, and limit attention below to Einstein algebras whose dual space of points is connected, in the weak topology defined in section \ref{algebras}.} An \textit{isometry} between spacetimes $(M, g)$ and $(M, g')$ is a smooth map $\phi:M\rightarrow M'$ such that $\phi^*(g')=g$, where $\phi^*$ is the pullback along $\phi$.\footnote{Note that we do not require isometries to be diffeomorphisms, so these are not necessarily isomorphisms, i.e., they may not be invertible.} Two spacetimes $(M,g)$, $(M',g')$ are isomorphic, for present purposes, if there is an invertible isometry between them, i.e., if there exists a diffeomorphism $\varphi:M\rightarrow M'$ that is also an isometry.  We then say the spacetimes are \emph{isometric}.

The use of category theoretic tools to examine relationships between theories is motivated by a simple observation: The class of models of a physical theory often has the structure of a category.\footnote{See footnote 6 above for references on this claim. } In what follows, we will represent general relativity with the category $\textbf{GR}$, whose objects are relativistic spacetimes $(M, g)$ and whose arrows are isometries between spacetimes.

According to the criterion for theoretical equivalence that we will consider, two theories are equivalent if their categories of models are ``isomorphic'' in an appropriate sense. In order to describe this sense, we need some basic notions from category theory.\footnote{We take for granted the definitions of a category, a covariant functor, and a contravariant functor. The reader is encouraged to consult \cite{cwm}, \cite{borceux1994}, or \citet{Leinster} for details.} Two (covariant) functors $F:\textbf{C}\rightarrow\textbf{D}$ and $G:\textbf{C}\rightarrow\textbf{D}$ are \textit{naturally isomorphic} if there is a family $\eta_c:Fc\rightarrow Gc$ of isomorphisms of $\textbf{D}$ indexed by the objects $c$ of $\textbf{C}$ that satisfies $\eta_{c'}\circ Ff=Gf\circ\eta_{c}$ for every arrow $f:c\rightarrow c'$ in $\textbf{C}$. The family of maps $\eta$ is called a \textit{natural isomorphism} and denoted $\eta:F\Rightarrow G$. The existence of a natural isomorphism between two functors captures a sense in which the functors are themselves ``isomorphic" to one another as maps between categories. Categories $\textbf{C}$ and $\textbf{D}$ are \textit{dual} if there are contravariant functors $F:\textbf{C}\rightarrow\textbf{D}$ and $G:\textbf{D}\rightarrow\textbf{C}$ such that $GF$ is naturally isomorphic to the identity functor $1_\textbf{C}$ and $FG$ is naturally isomorphic to the identity functor $1_\textbf{D}$. Roughly speaking, $F$ and $G$ give a duality, or contravariant equivalence, between two categories if they are contravariant isomorphisms in the category of categories up to isomorphism in the category of functors. One can think of dual categories as ``mirror images'' of one another, in the sense that the two categories differ only in that the directions of their arrows are systematically reversed.

For the purposes of capturing the relationship between general relativity and the theory of Einstein algebras, we will appeal to the following standard of equivalence.
\begin{criterion}
Theories $T_1$ and $T_2$ are equivalent if the category of models of $T_1$ is dual to the category of models of $T_2$.
\end{criterion}
This criterion is almost exactly the same as the one proposed by \cite{halvorson2012, halvorson2015} and \cite{WeatherallTE}. The only difference is that they require the categories of models in question be \textit{equivalent}, rather than dual.  Equivalence differs from duality only in that the two functors realizing an equivalence are covariant, rather than contravariant. When $T_1$ and $T_2$ are equivalent in either sense, there is a way to ``translate'' (or perhaps better, ``transform'') models of $T_1$ into models of $T_2$, and vice versa.  These transformations take objects of one category---models of one theory---to objects of the other in a way that preserves all of the structure of the arrows between objects, including, for instance, the group structure of the automorphisms of each object, the inclusion relations of ``sub-objects'', and so on.  These transformations are guaranteed to be inverses to one another ``up to isomorphism,'' in the sense that if one begins with an object of one category, maps using a functor realizing (half) an equivalence or duality to the corresponding object of the other category, and then maps back with the appropriate corresponding functor, the object one ends up with is isomorphic to the object with which one began.  In the case of the theory of Einstein algebras and general relativity, there is also a precise sense in which they preserve the empirical structure of the theories.

\section{Smooth algebras and smooth manifolds}

In what follows, by \emph{algebra} we mean a commutative, associative algebra with unit over $\mathbb{R}$---i.e., a real vector space endowed with a commutative, associative product and containing a multiplicative identity.\footnote{Our treatment of smooth algebras follows \citet{Nestruev}.  For more on Einstein algebras in particular, see \citet{GerochEA} and \cite{Heller}.}  By \emph{(algebra) homomorphism} we will mean a map that preserves the vector space operations, the product, and the multiplicative identity; a bijective algebra homomorphism is an \emph{(algebra) isomorphism}.

\subsection{Smooth algebras}\label{algebras}

Let $A$ be an algebra. We denote by $|A|$ the collection of homomorphisms from $A$ to $\mathbb{R}$. The elements of $|A|$  are known as the \emph{points} of the algebra $A$; $|A|$  itself is the \emph{dual space of points}.\footnote{In some treatments of related material, including \citet{Rynasiewicz}, ``points'' are reconstructed as maximal ideals of appropriate rings.  We find the present approach more transparent, mostly because it emphasizes the sense in which points are ``dual'' to smooth functions in the same sense of duality that one encounters elsewhere in geometry and algebra.  But it is closely related to the approach \citet{Rynasiewicz} uses.  In particular, if $x$ is an element of $|A|$, then $\ker(x)$ is an ideal, since if $f\in\ker(x)$, then for any $g\in A$, $x(fg)=x(f)x(g)=0$; moreover, it is maximal, since by linearity, $x$ is surjective, and thus $A/\ker(x)=\mathbb{R}$ and it is well known that for a commutative, unital ring (or algebra) $A$, an ideal $I$ is maximal iff $A/I$ is a field.  Conversely, as we note above, if $A$ is geometric, then $A$ is canonically isomorphic to the space $\tilde{A}$, the maximal ideals of which consist in all functions vanishing at a given point $x\in|A|$.  So points in the sense that Ryansiewicz considers uniquely determine points in the present sense, and vice versa.}  (Note, however, that we do not put any algebraic structure on $|A|$.)  An algebra $A$ is \emph{geometric} if there are no non-zero elements $f\in A$ that lie in the kernel of all of the elements of $|A|$, i.e., if $\bigcap_{p\in |A|} \ker(p) = \{\mathbf{0}\}$.\footnote{Note that the expression ``geometric algebra'' is also used (somewhat more often) to described so-called Clifford algebras.  See, for instance, \citet{Hestenes+Sobczyk} or \citet{Doran+Lasenby}.  The present sense of the term is unrelated.}

We also define the space $\tilde{A}$ as follows:
\[
\tilde{A}=\{\tilde{f}: |A| \rightarrow \mathbb{R}: \exists f\in A \text{ s.t. } \tilde{f}(x) = x(f))\}.
\]
There is a natural algebraic structure on $\tilde{A}$, with operations given by:
\begin{align*}
(\tilde{f}+\alpha \tilde{g})(x) &= \tilde{f}(x) + \alpha \tilde{g}(x)=x(f) + \alpha x(g)\\
(\tilde{f}\cdot\tilde{g})(x) &= \tilde{f}(x)\cdot\tilde{g}(x) = x(f)\cdot g(f)
\end{align*}
There is a canonical map $\tau: A \rightarrow \tilde{A}$ defined by $f \mapsto \tilde{f}$. In general, $\tau$ is a surjective homomorphism.  For geometric algebras, however, $\tau$ is also injective, and thus an isomorphism. We will therefore freely identify a geometric algebra $A$ with $\tilde{A}$ through implicit appeal to $\tau$.

Given a geometric algebra $A$, the \emph{weak topology} on $|A|$ is the coarsest topology on $|A|$ relative to which every element of $A$ (or really, $\tilde{A}$) is continuous.  This defines a Hausdorff topology on $A$.  Now suppose we have an algebra homomorphism $\psi:A\rightarrow B$. Then $\psi$ determines a map $|\psi|: |B| \rightarrow |A|$ defined by $|\psi| :x\mapsto x\circ\psi$.  Any map $|\psi|$ that arises this way is continuous in the weak topology; if $\psi$ is an isomorphism, then $|\psi|$ is a homeomorphism.

Now let $A$ be a geometric algebra, and suppose that $S\subseteq  |A|$ is any subset of its dual space of points.  Then the \emph{restriction} $A_{|S}$ of $A$ to $S$ is the set of all functions $f:S\rightarrow\mathbb{R}$ such that for any point $x\in S$, there exists an open neighborhood $O\subseteq  |A|$ containing $x$, and an element $\bar{f}\in A$ such that $f$ and $\bar{f}$ agree on all points in $O$. One easily verifies that $A_{|S}$ is an algebra, though it is not in general a subalgebra of $A$.

Given any $S\subseteq |A|$, we can define a homomorphism $\rho_S:A\rightarrow A_{|S}$, defined by $f\mapsto f_{|S}$, where here the restriction $f_{|S}$ is meant in the ordinary sense.  The map $\rho_{S}$ is known as the \emph{restriction homomorphism}.  A special case of restriction is restriction to $|A|$, i.e., to the dual space of the algebra, $A_{||A|}$.  This is the collection of all maps on $A$ that are ``locally equivalent'' to elements of $A$.  We will say that $A$ is \emph{complete} if it contains all maps of this form---i.e., if the restriction homomorphism $\rho_{A}:A\rightarrow A_{||A|}$ is surjective.

A complete, geometric algebra $A$ is called \emph{smooth} if there exists a finite or countable open covering $\{U_k\}$ of the dual space $|A|$ such that all the algebras $A_{|U_k}$ are isomorphic to the algebra $C^{\infty}(\mathbb{R}^n)$ of smooth functions on $\mathbb{R}^n$, for some fixed $n$.  Here $n$ is known as the \emph{dimension} of the algebra.  Note that this sense of dimension is unrelated to the dimension of the vector space underlying $A$.

\subsection{The duality of smooth algebras and manifolds} \label{algman}

Smooth algebras and smooth manifolds bear a close relationship to one another. Following \cite{Nestruev}, we present here one standard way to make this relationship precise. We begin by defining two categories: the category $\textbf{SmoothMan}$, whose objects are smooth manifolds and whose arrows are smooth maps, and the category $\textbf{SmoothAlg}$, whose objects are smooth algebras and whose arrows are algebra homomorphisms. We show that these two categories are dual to one another. This result will be of crucial importance in our discussion of Einstein algebras and relativistic spacetimes.

There is a way to ``translate'' from the framework of smooth manifolds into the framework of smooth algebras. We call this translation $F$ and define it as follows.
\begin{itemize}
\item Given a smooth manifold $M$, $F(M) = C^{\infty}(M)$ is the algebra of smooth scalar functions on $M$.
\item Given a smooth map $\phi:M\rightarrow N$, $F(\phi)$ is the map $\hat{\phi}: C^{\infty}(N)\rightarrow C^{\infty}(M)$ given by $\hat{\phi}(f)=f\circ\phi$ for any $f\in C^{\infty}(N)$.
\end{itemize}

Before showing that it is a contravariant functor between $\textbf{SmoothMan}$ and $\textbf{SmoothAlg}$, we pause to make a couple of remarks about $F$. Let $M$ be a manifold and consider the algebra $F(M) = C^\infty(M)$ of smooth scalar functions on $M$. There is a natural correspondence between points in $M$ and elements of $|C^{\infty}(M)|$, given by the following map:
\begin{equation}
\theta_M:M\rightarrow|C^\infty(M)|\qquad \theta_M(p)(f)=f(p)
\label{thetadefinition}
\end{equation}
for any $p\in M$ and $f\in C^\infty(M)$. Note that $\theta_M(p)$ is indeed a homomorphism $C^\infty(M)\rightarrow\mathbb{R}$. One can easily verify that the algebra $C^\infty(M)$ is geometric, so we can consider the weak topology on $|C^\infty(M)|$. One then proves that relative to the weak topology the map $\theta_M:M\rightarrow |C^\infty(M)|$ is a homeomorphism \cite[7.4]{Nestruev}.

This fact allows one to prove the following simple result. The map $F$ translates a smooth manifold into a smooth algebra.

\begin{prop}
If $M$ is a smooth manifold, then $F(M) = C^\infty(M)$ is a smooth algebra \cite[7.5--7.6]{Nestruev}.
\label{smoothalgebra}
\end{prop}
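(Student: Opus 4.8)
The plan is to check, one by one, the three conditions in the definition of a \emph{smooth algebra} for the algebra $A := C^\infty(M)$: that it is geometric, that it is complete, and that $|A|$ carries a finite-or-countable open cover $\{U_k\}$ with each $A_{|U_k}\cong C^\infty(\mathbb{R}^n)$. The one tool doing all the real work is the homeomorphism $\theta_M : M \to |C^\infty(M)|$ recalled above (\cite[7.4]{Nestruev}); it lets me translate every assertion about the dual space $|A|$, in its weak topology, into an assertion about the manifold $M$, in its manifold topology. That $A$ is geometric is then immediate: since $\theta_M$ is surjective, a function $f \in A$ lying in $\ker(p)$ for all $p \in |A|$ is precisely a smooth function vanishing at every point of $M$, so $f = \mathbf{0}$. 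Being geometric, $A$ is identified with $\tilde{A}$ via the isomorphism $\tau$, and I will use that identification freely below.

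For completeness I must show the restriction homomorphism $\rho_A : A \to A_{||A|}$ is surjective. Let $f \in A_{||A|}$, so $f : |A| \to \mathbb{R}$ agrees, near each point of $|A|$, with (the $\tau$-image of) some element of $A$. Transport $f$ to $M$ by setting $g := f \circ \theta_M$. Because $\theta_M$ is a homeomorphism and carries elements of $A$ to genuine smooth functions on $M$, each $p \in M$ has an open neighborhood on which $g$ coincides with a smooth function; since smoothness is local, $g \in C^\infty(M) = A$, and by construction $\rho_A(g) = f$. Hence $\rho_A$ is onto and $A$ is complete.

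It remains to build the cover. Since $M$ admits a countable atlas---this uses second countability, which holds for connected paracompact manifolds and is in any case part of the standard manifold convention in force here (cf.\ the footnotes on connectedness)---we may choose a countable family $\{V_k\}$ of open subsets covering $M$, each carrying a diffeomorphism onto $\mathbb{R}^n$, where $n = \dim M$. Set $U_k := \theta_M(V_k)$; this is a countable open cover of $|A|$, and it remains to see that $A_{|U_k} \cong C^\infty(\mathbb{R}^n)$. Transporting the restriction construction along $\theta_M$ and the identification $A \cong \tilde{A}$, this amounts to showing $C^\infty(M)_{|V_k} = C^\infty(V_k)$: the inclusion ``$\subseteq$'' holds because any function on the open set $V_k$ that is locally extendable to elements of $C^\infty(M)$ is in particular locally smooth on $V_k$, hence smooth; the inclusion ``$\supseteq$'' is where smooth bump functions enter, extending any $h \in C^\infty(V_k)$, near each of its points, to an element of $C^\infty(M)$ agreeing with $h$ there. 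Finally $C^\infty(V_k) \cong C^\infty(\mathbb{R}^n)$ by pulling functions back along the chart diffeomorphism, which finishes the verification.

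I expect the substantive---though still routine---step to be the last one: verifying that the abstract restriction construction $A \mapsto A_{|S}$ really does commute with the homeomorphism $\theta_M$ and the canonical identification $A \cong \tilde{A}$, so that checking ``$A_{|U_k} \cong C^\infty(\mathbb{R}^n)$'' reduces to the hands-on fact ``$C^\infty(M)_{|V_k} = C^\infty(V_k)$'' together with the standard bump-function extension argument for the nontrivial inclusion. Everything here is precisely what is packaged into the cited passage \cite[7.5--7.6]{Nestruev}; the only point requiring a little care beyond their treatment is the countability of the atlas when $M$ is permitted to be disconnected, which is why, as noted in the introductory footnotes, one either imposes connectedness or restricts attention accordingly.
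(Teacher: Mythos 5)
Your proof is correct, and it follows exactly the route the paper delegates to the citation of Nestruev (7.5--7.6): use the homeomorphism $\theta_M$ to transfer the three defining conditions (geometric, complete, locally isomorphic to $C^\infty(\mathbb{R}^n)$ over a countable cover) to statements about $M$, with bump functions handling the nontrivial inclusion $C^\infty(V_k)\subseteq C^\infty(M)_{|V_k}$. The caveat you raise about countability of the atlas is legitimate given that the paper assumes only Hausdorff and paracompact (not second countable or connected), and your resolution matches the paper's own footnoted hedge.
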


The next important result about $F$ captures a sense in which the smooth maps between manifolds are characterized purely by their action on the algebras of smooth scalar functions.

\begin{prop}
Let $M$ and $N$ be smooth manifolds. A map $\phi:M\rightarrow N$ is smooth if and only if $\hat{\phi}(C^\infty(N))\subset C^\infty(M)$ \cite[7.16--7.18]{Nestruev}.
\label{smoothmaps}
\end{prop}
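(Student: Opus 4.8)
The plan is to prove both directions of the biconditional, establishing that a map $\phi:M\rightarrow N$ between smooth manifolds is smooth if and only if the pullback $\hat{\phi}$ maps $C^{\infty}(N)$ into $C^{\infty}(M)$. The forward direction is routine: if $\phi$ is smooth and $f\in C^{\infty}(N)$, then $\hat{\phi}(f)=f\circ\phi$ is a composition of smooth maps, hence smooth, so $\hat{\phi}(C^{\infty}(N))\subset C^{\infty}(M)$. The substance of the proposition is the converse, and here I would argue locally, reducing to the case of coordinate charts.

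For the converse, suppose $\hat{\phi}(C^{\infty}(N))\subset C^{\infty}(M)$. To show $\phi$ is smooth, it suffices to check smoothness in a neighborhood of each point $p\in M$. Fix such a $p$ and a chart $(V,\psi)$ on $N$ around $\phi(p)$, with coordinate functions $y^1,\dots,y^n:V\rightarrow\mathbb{R}$. The key technical move is to extend each $y^i$ (a priori only defined on $V$) to a genuine element of $C^{\infty}(N)$: using a smooth bump function supported in $V$ and equal to $1$ on a smaller neighborhood $V'$ of $\phi(p)$, one obtains $\bar{y}^i\in C^{\infty}(N)$ agreeing with $y^i$ on $V'$. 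By hypothesis, $\hat{\phi}(\bar{y}^i)=\bar{y}^i\circ\phi\in C^{\infty}(M)$. On the open set $\phi^{-1}(V')\ni p$, these compositions agree with $y^i\circ\phi$, so the coordinate components of $\phi$ (in the chart $\psi$) are smooth near $p$. Combined with continuity of $\phi$ (which itself follows, if needed, from the hypothesis together with the characterization of the weak topology, or can be established directly), this shows $\phi$ is smooth at $p$. Since $p$ was arbitrary, $\phi$ is smooth.

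I expect the main obstacle to be handling the extension-by-bump-function step cleanly while keeping track of which open sets one is working over — in particular, ensuring that $\phi^{-1}(V')$ is open (continuity) so that agreement of the smooth function $\bar{y}^i\circ\phi$ with the local expression $y^i\circ\phi$ genuinely witnesses smoothness on an open neighborhood. One also needs to confirm that establishing smoothness of all coordinate components in a chart around $\phi(p)$, on an open neighborhood of $p$, suffices for smoothness of $\phi$ there; this is standard manifold theory but worth stating. Since the excerpt cites Nestruev [7.16--7.18] for this result, I would present the above as a proof sketch and defer the verification of the routine lemmas (existence of bump functions, the local criterion for smoothness of maps) to that reference, emphasizing instead the conceptual point that the algebra of scalar functions, via pullback, already encodes enough to detect smoothness of maps.
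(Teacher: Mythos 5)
Your sketch is correct, and it is essentially the argument of the cited source: the paper itself offers no proof of this proposition, deferring entirely to Nestruev [7.16--7.18], where the converse is proved exactly by extending chart coordinate functions with bump functions and pulling them back. The one step that must come first rather than ``if needed'' is the continuity of $\phi$ (so that $\phi^{-1}(V')$ is open), but your proposed route works: since the manifold topology on $N$ coincides with the weak topology induced by $C^\infty(N)$ (the homeomorphism $\theta_N$ discussed just before the proposition), and each $f\circ\phi$ is smooth hence continuous, $\phi$ pulls back a subbasis of opens to opens.
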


These results allow one to make precise a sense in which $F$ is indeed a ``translation'' from the framework of smooth manifolds into the framework of smooth algebras. We have the following result.

\begin{lem} \label{Fcontra}
$F:\textbf{SmoothMan}\rightarrow\textbf{SmoothAlg}$ is a contravariant functor.
\end{lem}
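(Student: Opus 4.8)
The plan is to verify directly the three defining properties of a contravariant functor: that $F$ sends objects to objects, that it sends arrows to arrows with reversed direction, and that it respects identities while reversing composition. First I would note that $F$ lands in the right place on objects: if $M$ is a smooth manifold, then $F(M) = C^\infty(M)$ is a smooth algebra, which is exactly the content of Proposition \ref{smoothalgebra}. Next I would check that $F$ is well-defined on arrows. Given a smooth map $\phi:M\rightarrow N$, the prescription $\hat\phi(f) = f\circ\phi$ a priori only defines a map from $C^\infty(N)$ into the algebra of all real-valued functions on $M$; the ``only if'' direction of Proposition \ref{smoothmaps} tells us that in fact $\hat\phi\bigl(C^\infty(N)\bigr)\subseteq C^\infty(M)$, so $\hat\phi$ really is a map $C^\infty(N)\rightarrow C^\infty(M)$. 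That it is an algebra homomorphism is then a routine computation: precomposition with $\phi$ commutes with pointwise addition and scalar multiplication, sends a pointwise product to the pointwise product of the images, and carries the constant function $1$ on $N$ to the constant function $1$ on $M$. Hence $F(\phi) = \hat\phi$ is an arrow $F(N)\rightarrow F(M)$ in $\textbf{SmoothAlg}$.

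It then remains to confirm functoriality with the arrows reversed. For the identity $1_M:M\rightarrow M$ and any $f\in C^\infty(M)$ we have $\widehat{1_M}(f) = f\circ 1_M = f$, so $F(1_M) = 1_{C^\infty(M)} = 1_{F(M)}$. For composable smooth maps $\phi:M\rightarrow N$ and $\psi:N\rightarrow P$ and any $f\in C^\infty(P)$, associativity of composition of maps gives $\widehat{\psi\circ\phi}(f) = f\circ(\psi\circ\phi) = (f\circ\psi)\circ\phi = \hat\phi\bigl(\hat\psi(f)\bigr)$, so $F(\psi\circ\phi) = \hat\phi\circ\hat\psi = F(\phi)\circ F(\psi)$, which is precisely the contravariant composition law.

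I do not expect any serious obstacle here: the substantive geometric facts — that $C^\infty(M)$ is a smooth algebra, and that smoothness of $\phi$ is exactly what forces $\hat\phi$ to take smooth functions to smooth functions — have already been isolated as Propositions \ref{smoothalgebra} and \ref{smoothmaps}, so what is left is bookkeeping. The one point that warrants a moment's care, and which I would flag explicitly in the write-up, is the well-definedness of $F$ on arrows, i.e.\ the appeal to Proposition \ref{smoothmaps} to ensure $\hat\phi$ maps into $C^\infty(M)$ rather than merely into the algebra of arbitrary real-valued functions on $M$; everything else follows by unwinding definitions.
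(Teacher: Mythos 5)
Your proof is correct and takes essentially the same approach as the paper's: objects are handled by Proposition \ref{smoothalgebra}, well-definedness of $F$ on arrows by the ``only if'' direction of Proposition \ref{smoothmaps}, and the homomorphism property and contravariant functor laws by routine verification. The only difference is that you write out explicitly the identity and composition checks that the paper dismisses as ``easy to see,'' which is a harmless (indeed welcome) addition.
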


\begin{proof}
Proposition \ref{smoothalgebra} immediately implies that $F(M)$ is indeed an object in $\textbf{SmoothAlg}$ for every smooth manifold $M$. Let $\phi:M\rightarrow N$ be a smooth map. We need to show that the map $F(\phi)=\hat{\phi}: F(N)\rightarrow F(M)$ is an algebra homomorphism. Proposition \ref{smoothmaps} implies that $\hat{\phi}(f)\in F(M)$ for every $f\in F(N)$. One can easily verify that $\hat{\phi}$ preserves the vector space operations, the product, and the multiplicative identity, so $F(\phi)=\hat{\phi}$ is an algebra homomorphism. Furthermore, it is easy to see that $F$ preserves identities and reverses composition.
This implies that $F:\textbf{SmoothMan}\rightarrow\textbf{SmoothAlg}$ is a contravariant functor.
\end{proof}

There is also a way to ``translate'' from the framework of smooth algebras into the framework of smooth manifolds. In order to describe this translation we need to do some work. Let $A$ be a smooth algebra. One can use the smooth algebraic structure of $A$ to define a smooth manifold $G(A)$. The underlying point set of the manifold $G(A)$ is the set $|A|$ of points of the algebra $A$.

Since $A$ is a smooth algebra, there is a covering of $|A|$ by open sets $\{U_k\}$ along with isomorphisms $i_k:A_{|U_k}\rightarrow C^{\infty}(\mathbb{R}^n)$ for some fixed $n$. We will use these open sets and isomorphisms to define charts $(U,\psi)$ on $|A|$. We first consider the maps
\[
h_k:A\rightarrow C^\infty(\mathbb{R}^n) \qquad h_k=i_k\circ \rho_{U_k},
\]
where $\rho_{U_k}: A\rightarrow A_{|U_k}$ is the restriction homomorphism. One can verify that the maps $|\rho_{U_k}|:|A_{|U_k}|\rightarrow U_k\subset|A|$ are homeomorphisms onto $U_k$ \cite[7.7-7.8]{Nestruev}. Since $i_k$ is an isomorphism and $|C^\infty(\mathbb{R}^n)|=\mathbb{R}^n$ \cite[3.16]{Nestruev}, it follows that $|h_k|=|\rho_{U_k}|\circ|i_k|$ is a homeomorphism $|h_k|:\mathbb{R}^n\rightarrow U_k$. We therefore define the charts $(U_k, \psi_k)$, where $\psi_k=|h_k|^{-1}$ for each $k\in\mathbb{N}$. One can verify that these charts are compatible \cite[7.10]{Nestruev}.

In addition to these charts $(U_k, \psi_k)$, we add charts of the form $(V\cap U_k, \psi_k)$ where $V\subset|A|$ is an open set and $k\in\mathbb{N}$. One can easily verify that these new charts are compatible both with each other and with the original charts $(U_k, \psi_k)$. Since the topology on $|A|$ is Hausdorff and since there is a countable cover of charts of the form $(U_k,\psi_k)$, if we throw in wholesale all of the charts on $|A|$ that are compatible with the charts of the form $(U_k, \psi_k)$ and $(V\cap U_k, \psi_k)$, then we will have defined a smooth (Hausdorff, paracompact) manifold \cite[Proposition 1.1.1]{MalamentGR}. We call this smooth manifold $G(A)$.

The smooth manifold $G(A)$ bears a close relationship to the original smooth algebra $A$. Indeed, the elements of $A$ correspond to smooth scalar functions on $G(A)$. This correspondence is given by the following map:
\begin{equation}
\eta_A:A\rightarrow FG(A)\qquad \eta_A:f\longmapsto (p\mapsto p(f))
\label{etadefinition}
\end{equation}
for every $f\in A$ and $p\in G(A)=|A|$. One can prove that for every $f\in A$ the function $p\mapsto p(f)$ is a smooth scalar function on $G(A)$, and furthermore, that the map $\eta_A$ is a bijection \cite[7.11]{Nestruev}. We can therefore think of elements of $A$ as smooth scalar functions on the manifold $G(A)$.

The translation $G$ from the framework of smooth algebras into the framework of smooth manifolds is defined as follows.
\begin{itemize}
\item Given a smooth algebra $A$, $G(A)$ is the smooth manifold defined above.
\item Given an algebra homomorphism $\psi :A\rightarrow B$, $G(\psi)$ is the map $|\psi|:|B|\rightarrow |A|$ between the manifolds $G(B)$ and $G(A)$.
\end{itemize}
Note that the definition of $G(\psi)$ makes sense since $G(B)$ and $G(A)$ have underlying point sets $|A|$ and $|B|$, respectively. Like Lemma \ref{Fcontra}, the following result captures a sense in which $G$ is a translation between these two frameworks.

\begin{lem} \label{Gcontra}
$G:\textbf{SmoothAlg}\rightarrow\textbf{SmoothMan}$ is a contravariant functor.
\end{lem}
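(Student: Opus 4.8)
The plan is to follow the same three-step template used in the proof of Lemma~\ref{Fcontra}: verify that $G$ sends objects to objects, that it sends arrows to arrows, and that it preserves identities while reversing composition. The object part requires no further work, since the construction given just above the statement already shows that $G(A)$ is a smooth (Hausdorff, paracompact) manifold for every smooth algebra $A$. So essentially all of the content lies in showing that $G(\psi)$ is a smooth map.

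Let $\psi:A\rightarrow B$ be an algebra homomorphism. First I would note that $G(\psi)=|\psi|$ is a well-defined function from $|B|$ to $|A|$: if $x\in|B|$ is a homomorphism $B\rightarrow\mathbb{R}$, then $x\circ\psi$ is again a homomorphism $A\rightarrow\mathbb{R}$, so $|\psi|(x)=x\circ\psi\in|A|$. Since $G(B)$ and $G(A)$ have underlying point sets $|B|$ and $|A|$, this makes $|\psi|$ a map of sets $G(B)\rightarrow G(A)$, and it is continuous in the manifold topologies by the observation recorded before the statement that maps of the form $|\psi|$ are continuous in the weak topology. To upgrade this to smoothness, I would appeal to Proposition~\ref{smoothmaps} applied to the manifolds $M=G(B)$ and $N=G(A)$: it suffices to check that pulling back along $|\psi|$ carries $C^\infty(G(A))$ into $C^\infty(G(B))$. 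Here the key move is to use the bijections $\eta_A:A\rightarrow FG(A)=C^\infty(G(A))$ and $\eta_B:B\rightarrow FG(B)=C^\infty(G(B))$ of \eqref{etadefinition}: every smooth scalar function on $G(A)$ has the form $\eta_A(f)$ for some $f\in A$, and a one-line computation shows that for every $x\in|B|$,
\[
\bigl(\eta_A(f)\circ|\psi|\bigr)(x)=\bigl(x\circ\psi\bigr)(f)=x\bigl(\psi(f)\bigr)=\eta_B\bigl(\psi(f)\bigr)(x).
\]
Thus the pullback of $\eta_A(f)$ along $|\psi|$ equals $\eta_B(\psi(f))$, which lies in $C^\infty(G(B))$. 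So the pullback condition of Proposition~\ref{smoothmaps} is satisfied and $|\psi|$ is smooth, i.e., an arrow $G(B)\rightarrow G(A)$ in $\textbf{SmoothMan}$.

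Finally, functoriality: $G(1_A)=|1_A|$ is the identity map on $|A|$, hence the identity arrow on $G(A)$; and for composable homomorphisms $\phi:A\rightarrow B$ and $\psi:B\rightarrow C$, checking on an arbitrary point $x\in|C|$ gives $|\psi\circ\phi|(x)=x\circ(\psi\circ\phi)=(x\circ\psi)\circ\phi=|\phi|\bigl(|\psi|(x)\bigr)$, so $G(\psi\circ\phi)=G(\phi)\circ G(\psi)$. Together these establish that $G$ is a contravariant functor.

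I expect the smoothness of $|\psi|$ to be the only step that takes any thought; the rest is bookkeeping. The crucial observation that makes it easy is that, under the identifications furnished by $\eta_A$ and $\eta_B$, the pullback map along $|\psi|$ is literally $\psi$ itself, so smoothness of $|\psi|$ is extracted directly from the already-established facts that $\eta_A$ and $\eta_B$ are bijections onto the respective algebras of smooth scalar functions and that $G(A)$, $G(B)$ are genuine smooth manifolds, so that Proposition~\ref{smoothmaps} applies.
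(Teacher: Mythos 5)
Your proposal is correct and follows essentially the same route as the paper: the displayed computation showing that the pullback of $\eta_A(f)$ along $|\psi|$ equals $\eta_B(\psi(f))$ is exactly the paper's identity $\widehat{|\psi|}\circ\eta_A=\eta_B\circ\psi$, and both arguments then use the bijectivity of $\eta_A$, $\eta_B$ together with Proposition~\ref{smoothmaps} to conclude that $|\psi|$ is smooth. The only difference is that you spell out the identity and composition checks that the paper leaves as ``easily verified.''
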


\begin{proof}
We have already shown that $G(A)$ is a smooth manifold for every smooth algebra $A$. Let $\psi:A\rightarrow B$ be an algebra homomorphism. We need to show that $G(\psi)=|\psi|:|B|\rightarrow |A|$ is a smooth map between the manifolds $G(B)$ and $G(A)$. We begin by showing that
\begin{equation}
\widehat{|\psi|} \circ\eta_A=\eta_B\circ \psi
\label{etanaturality}
\end{equation}
For every $f\in A$ and $p\in |B|$ we see that following equations hold:
\begin{align*}
(\widehat{|\psi|}\circ\eta_A(f))(p)&=(\eta_A(f)\circ |\psi|)(p)\\
&=\eta_A(f)(p\circ \psi)\\
&=(p\circ\psi)(f)\\
&=(\eta_B\circ \psi(f))(p)
\end{align*}
The first equality follows from the definition of $\widehat{|\psi|}$, the second from the definition of $|\psi|$, the third from the definition of $\eta_A$, and the fourth from the definition of $\eta_B$. This establishes equation \eqref{etanaturality}. Since the maps $\eta_A$ and $\eta_B$ are bijections, \eqref{etanaturality} implies that $\widehat{|\psi|}=\eta_B\circ\psi\circ\eta_A^{-1}$. And this means that $\widehat{|\psi|}(FG(A))\subset FG(B)$. Proposition \ref{smoothmaps} then guarantees that $|\psi|:G(B)\rightarrow G(A)$ is a smooth map. One easily verifies that $G$ preserves identities and reverses composition,
so $G:\textbf{SmoothAlg}\rightarrow\textbf{SmoothMan}$ is a contravariant functor.
\end{proof}

We have shown that the maps $F:\textbf{SmoothMan}\rightarrow\textbf{SmoothAlg}$ and $G:\textbf{SmoothAlg}\rightarrow\textbf{SmoothMan}$ are contravariant functors. We now show that they are ``up to isomorphism'' inverses of one another. The following theorem makes this idea precise.

\begin{thm}
\label{man alg duality}
The categories $\textbf{SmoothMan}$ and $\textbf{SmoothAlg}$ are dual.
\end{thm}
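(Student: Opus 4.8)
I would prove duality by upgrading the contravariant functors $F$ and $G$ to a contravariant equivalence, using precisely the maps $\theta_M$ and $\eta_A$ already introduced. Concretely, the plan is to show that $\eta$ assembles into a natural isomorphism $1_{\textbf{SmoothAlg}}\Rightarrow FG$ and that $\theta$ assembles into a natural isomorphism $1_{\textbf{SmoothMan}}\Rightarrow GF$; since $F$ and $G$ are contravariant, $FG$ and $GF$ are covariant, so these statements are well posed, and Lemmas \ref{Fcontra} and \ref{Gcontra} guarantee that $FG(A)$ is a smooth algebra and $GF(M)$ a smooth manifold for every object $A$ and $M$.

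For the algebra side, recall that $\eta_A:A\rightarrow FG(A)$, $f\mapsto(p\mapsto p(f))$, is a bijection by \cite[7.11]{Nestruev}. It remains only to check that each $\eta_A$ is an algebra homomorphism, which is immediate from the definition of the algebra operations on $FG(A)=C^\infty(G(A))$: for $p\in G(A)=|A|$ we have $\eta_A(f+\alpha g)(p)=p(f+\alpha g)=p(f)+\alpha p(g)$, and similarly $\eta_A(fg)(p)=p(fg)=p(f)p(g)$ and $\eta_A(1)(p)=p(1)=1$, using that $p$ is a homomorphism. Hence each $\eta_A$ is an algebra isomorphism. Naturality is already in hand: the computation establishing equation \eqref{etanaturality} in the proof of Lemma \ref{Gcontra} says exactly that $FG(\psi)\circ\eta_A=\eta_B\circ\psi$ for every algebra homomorphism $\psi:A\rightarrow B$ (note that $FG(\psi)=F(G(\psi))=\widehat{|\psi|}$). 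So $\eta:1_{\textbf{SmoothAlg}}\Rightarrow FG$ is a natural isomorphism.

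For the manifold side, $GF(M)=G(C^\infty(M))$ has underlying point set $|C^\infty(M)|$, and $\theta_M:M\rightarrow|C^\infty(M)|$, $p\mapsto(f\mapsto f(p))$, is a homeomorphism in the weak topology by \cite[7.4]{Nestruev}. The substantive point is to check that $\theta_M$ is in fact a \emph{diffeomorphism} onto $GF(M)$ with its constructed smooth structure, and I would do this via Proposition \ref{smoothmaps} rather than by chasing the charts $(U_k,\psi_k)$ directly. First, for $f\in C^\infty(M)$ and $q\in M$ one computes $\bigl(\widehat{\theta_M}\circ\eta_{C^\infty(M)}\bigr)(f)(q)=\eta_{C^\infty(M)}(f)\bigl(\theta_M(q)\bigr)=\theta_M(q)(f)=f(q)$, so $\widehat{\theta_M}\circ\eta_{C^\infty(M)}=\mathrm{id}_{C^\infty(M)}$; since $\eta_{C^\infty(M)}$ is a bijection onto $FG(C^\infty(M))=C^\infty(GF(M))$, this gives $\widehat{\theta_M}\bigl(C^\infty(GF(M))\bigr)\subseteq C^\infty(M)$, and Proposition \ref{smoothmaps} makes $\theta_M$ smooth. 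Reading the same identity the other way, for $p\in GF(M)$ one has $\bigl(\widehat{\theta_M^{-1}}(f)\bigr)(p)=f\bigl(\theta_M^{-1}(p)\bigr)=p(f)=\eta_{C^\infty(M)}(f)(p)$, so $\widehat{\theta_M^{-1}}(f)=\eta_{C^\infty(M)}(f)\in C^\infty(GF(M))$, and Proposition \ref{smoothmaps} again makes $\theta_M^{-1}$ smooth; hence $\theta_M$ is a diffeomorphism. Naturality of $\theta$ is then a direct check: for smooth $\phi:M\rightarrow N$ we have $GF(\phi)=|\widehat{\phi}|$, and for $p\in M$, $f\in C^\infty(N)$, $\bigl(|\widehat{\phi}|(\theta_M(p))\bigr)(f)=\theta_M(p)\bigl(\widehat{\phi}(f)\bigr)=\theta_M(p)(f\circ\phi)=f(\phi(p))=\theta_N(\phi(p))(f)$, so $GF(\phi)\circ\theta_M=\theta_N\circ\phi$.

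Putting these together, $FG\cong 1_{\textbf{SmoothAlg}}$ and $GF\cong 1_{\textbf{SmoothMan}}$, which is by definition what it means for $\textbf{SmoothMan}$ and $\textbf{SmoothAlg}$ to be dual. I expect the only genuine obstacle to be the diffeomorphism claim for $\theta_M$ — i.e., verifying that the abstract smooth structure that $G$ places on $G(C^\infty(M))$ really is transported to the original smooth structure on $M$ — and the cleanest route is the one above, routing everything through Proposition \ref{smoothmaps} together with the already-established bijectivity of $\eta$; everything else is bookkeeping with the definitions of $F$, $G$, $\eta_A$, and $\theta_M$.
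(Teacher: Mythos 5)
Your proof is correct and follows essentially the same route as the paper's: both establish the duality by showing that the components of $\eta$ and $\theta$ are isomorphisms and that the naturality squares commute, with the forward smoothness of $\theta_M$ obtained via Proposition \ref{smoothmaps} and the bijectivity of $\eta_{C^\infty(M)}$. The only difference is cosmetic --- you phrase the key computation as the identity $\widehat{\theta_M}\circ\eta_{C^\infty(M)}=\mathrm{id}_{C^\infty(M)}$ and explicitly carry out the smoothness check for $\theta_M^{-1}$, which the paper leaves as ``analogous.''
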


\begin{proof}
We show that the families of maps $\eta:1_\textbf{SmoothAlg}\Rightarrow F\circ G$ and $\theta:1_\textbf{SmoothMan}\Rightarrow G\circ F$ defined in equations \eqref{thetadefinition} and \eqref{etadefinition} are natural isomorphisms. Since $F$ and $G$ are contravariant functors, this will imply that \textbf{SmoothMan} and \textbf{SmoothAlg} are dual categories.

We first consider $\eta$. We need to verify that for every smooth algebra $A$ the component $\eta_A:A\rightarrow FG(A)$ is an algebra isomorphism. We have already seen that $\eta_A$ is bijective. One easily checks that $\eta_A$ preserves the vector space operations, the product, and the multiplicative identity. Equation \eqref{etanaturality} implies that naturality square for $\eta$ commutes, so $\eta:1_\textbf{SmoothAlg}\Rightarrow F\circ G$ is a natural isomorphism.

We now consider $\theta$. We need to verify that for every smooth manifold $M$ the component $\theta_M:M\rightarrow GF(M)$ is a diffeomorphism. We already know that it is bijective. We show that $\hat{\theta}_M(FGF(M))\subset F(M)$ and then use Proposition \ref{smoothmaps} to conclude that $\theta_M$ is smooth. Let $f\in FGF(M)$. Since $\eta_{F(M)}:F(M)\rightarrow FGF(M)$ is a bijection there is some $g\in F(M)$ such that $\eta_{F(M)}(g)=f$. We then see that the following equalities hold for every $p\in M$:
\begin{align*}
\hat{\theta}_M(f)(p)= \hat{\theta}_M (\eta_{F(M)}(g))(p)=(\eta_{F(M)}(g)\circ\theta_M)(p)=\theta_M(p)(g)=g(p)
\end{align*}
The first equality holds by our choice of the function $g\in F(M)$, the second by the definition of $\hat{\theta}_M$, the third by the definition of $\eta_{F(M)}$, and the fourth by the definition of $\theta_M$. This implies that $\hat{\theta}_M(f)=g\in F(M)$. So we have shown that $\hat{\theta}_M(FGF(M))\subset F(M)$, which by Proposition \ref{smoothmaps} means that $\theta_M:M\rightarrow GF(M)$ is a smooth map. One argues in an analogous manner to show that $\theta_M^{-1}$ is smooth. Therefore $\theta_M:M\rightarrow GF(M)$ is a diffeomorphism.

We also need to show that the naturality square for $\theta$ commutes. Let $\phi:M\rightarrow N$ be a smooth map. We show that $GF(\phi)\circ\theta_M=\theta_N\circ\phi$. For every $p\in M$ and $f\in F(N)$ we see that the following equalities hold:
\begin{align*}
(|\hat{\phi}|\circ\theta_M(p))(f)&=(\theta_M(p)\circ\hat{\phi})(f)\\
&=\theta_M(p)(f\circ\phi)\\
&=f\circ\phi(p)\\
&=(\theta_N\circ\phi(p))(f)
\end{align*}
The first equality follows from the definition of $|\hat{\phi}|$, the second from the definition of $\hat{\phi}$, the third from the definition of $\theta_M$, and the fourth from the definition of $\theta_N$. Since $p$ and $f$ were arbitrary, this implies that $GF(\phi)\circ\theta_M=\theta_N\circ\phi$, so $\theta:1_\textbf{SmoothMan}\Rightarrow G\circ F$ is a natural isomorphism, and the categories $\textbf{SmoothMan}$ and $\textbf{SmoothAlg}$ are dual.
\end{proof}

Theorem \ref{man alg duality} allows one to identify the smooth algebra $A$ with $FG(A)=C^\infty(|A|)$ and the smooth manifold $M$ with $GF(M)=|C^\infty(M)|$. In addition, one can identify an algebra homomorphism $\psi:A\rightarrow B$ between smooth algebras with $FG(\psi)=\widehat{|\psi|}$ and a smooth map $\phi:M\rightarrow N$ between smooth manifolds with $GF(\phi)=|\hat{\phi}|$. We will implicitly appeal to these identifications in what follows.

\section{Einstein algebras and relativistic spacetimes}

The theory of Einstein algebras proceeds by taking a 4-dimensional smooth algebra $A$---which by Theorem \ref{man alg duality} corresponds to some smooth 4-dimensional manifold---and defining additional structure on it. This structure corresponds to the various fields that one requires to formulate general relativity. We begin by describing this structure.


Let $A$ be a smooth algebra.  A \textit{derivation on $A$} is an $\mathbb{R}$-linear map $\hat{X}: A \rightarrow A$ that satisfies the Leibniz rule, in the sense that
$$
\hat{X}(fg) = f\hat{X}(g) + g\hat{X}(f)
$$
for all $f,g \in A$. The space of all derivations on $A$ is a module over $A$. We will use the notation $\hat{\Gamma}(A)$ to denote this module and $\hat{\Gamma}^*(A)$ to denote the dual module. The elements of the dual module $\hat{\Gamma}^*(A)$ are just the $A$-linear maps $\hat{\Gamma}^*(A)\rightarrow A$.  Derivations on $A$ allow one to define an analog to ``tangent spaces'' on smooth algebras. Given a derivation $\hat{X}$ on $A$ and a point $p\in |A|$, one can consider the linear map $\hat{X}_p:A\rightarrow\mathbb{R}$ defined by $\hat{X}_p(f)=\hat{X}(f)(p)$. The \textit{tangent space to $A$} at a point $p\in |A|$ is the vector space $T_pA$ whose elements are maps $\hat{X}_p: A \rightarrow \mathbb{R}$. The cotangent space to $A$ at a point $p\in |A|$ is defined similarly.

Derivations $\hat{X}$ on the smooth algebra $A$ naturally correspond to ordinary smooth vector fields $X$ on the manifold $G(A)=|A|$. The correspondence is given by
\begin{equation}\label{vector field correspondence}
\hat{X}(f)(p) =
X_p(f)
\end{equation}
where $f \in C^{\infty}(|A|) = A$ and $p \in |A|$. This correspondence plays a crucial role in the following results, so we take a moment here to unravel the idea behind it. Given a derivation $\hat{X}$ on $A$, equation \eqref{vector field correspondence} defines a vector field $X$ on the manifold $G(A)$. This vector field $X$ assigns the vector $X_p$ to the point $p\in G(A)$, where the vector $X_p$ is defined by its action $f\mapsto \hat{X}(f)(p)$ on smooth scalar functions $f\in A$ on the manifold $G(A)$. One uses the fact that $\hat{X}$ satisfies the Leibniz rule to show that $X_p$ is indeed a vector at the point $p\in |A|$. One also verifies that the vector field $X$ is smooth.

Conversely, given a vector field $X$ on the manifold $G(A)$, equation \eqref{vector field correspondence} defines the derivation $\hat{X}$ on $A$. The derivation $\hat{X}$ maps an element $f\in A$ to the element of $A$ defined by the scalar function $X(f)$ on the manifold $G(A)$. It follows immediately that $\hat{X}$ is linear and satisfies the Leibniz rule. One can argue in a perfectly analogous manner to show that elements of $\hat{\Gamma}^*(A)$ correspond to smooth covariant vector fields on the manifold $G(A)$. Note also that given a point $p\in|A|$ the correspondence \eqref{vector field correspondence} allows one to naturally identify the elements $\hat{X}_p$ of the tangent space $T_pA$ and the vectors $X_p$ at the point $p$ in the manifold $G(A)$.



A \emph{metric} on a smooth algebra $A$ is a module isomorphism $\hat{g}: \hat{\Gamma}(A) \rightarrow \hat{\Gamma}^*(A)$ that is symmetric, in the sense that $\hat{g}(\hat{X})(\hat{Y})=\hat{g}(\hat{Y})(\hat{X})$ for all derivations $\hat{X}$ and $\hat{Y}$ on $A$. A metric $\hat{g}$ on $A$ induces a map $\hat{\Gamma}(A)\times\hat{\Gamma}(A)\rightarrow A$ defined by
$$
\hat{X}, \hat{Y}\longmapsto\hat{g}(\hat{X})(\hat{Y})
$$
Given a point $p\in|A|$, a metric on $A$ also induces a map $T_pA\times T_pA\rightarrow\mathbb{R}$ defined by $\hat{X}_p, \hat{Y}_p\mapsto \hat{g}(\hat{X},\hat{Y})(p)$.
We will occasionally abuse notation and use $\hat{g}$ to refer to all three of these maps, but it will always be clear from context which map is intended.

If $\hat{g}$ is a metric on an $n$-dimensional smooth algebra $A$ and $p$ is a point in $|A|$, then there exists an $m$ with $0\leq m \leq n$ and a basis $\xi_1, \ldots, \xi_n$ for the tangent space $T_p A$ such that
$$
\begin{array}{ccrl}
\hat{g}(\xi_i , \xi_i ) & = & +1 &\qquad \text{if }\quad 1\leq i\leq m\\
\hat{g}(\xi_j , \xi_j ) & = & -1 &\qquad \text{if }\quad m < j\leq n\\
\hat{g}(\xi_i , \xi_j ) & = & 0 &\qquad \text{if }\quad i\neq j
\end{array}
$$
We call the pair $(m, n-m)$ the \textit{signature} of $\hat{g}$ at the point $p\in|A|$. A metric $\hat{g}$ on $|A|$ that has signature $(1, n-1)$ at every point $p\in |A|$ is called a metric of \textit{Lorentz signature}.

We now have the resources necessary to begin discussing the theory of Einstein algebras. An \textit{Einstein algebra} is a pair $(A, \hat{g})$, where $A$ is a smooth algebra and $\hat{g}$ is a metric on $A$ of Lorentz signature. Before proving that the category of Einstein algebras is dual to the category of relativistic spacetimes, we need some basic facts about the relationship between metrics on algebras and metrics on manifolds.

\begin{lem} \label{metrics}
Let $M$ be an $n$-dimensional smooth manifold and let $A$ be an $n$-dimensional smooth algebra. Then the following all hold:
\begin{enumerate}
\item[(1)] If $g$ is a Lorentzian metric on $M$, then $\hat{g}$ is a Lorentzian metric on the algebra $F(M)=C^{\infty}(M)$, where $\hat{g}(\hat{X})(\hat{Y}):=g(X, Y)$;
\item[(2)] If $\hat{g}$ is a Lorentzian metric on $A$, then $|\hat{g}|$ is a Lorentzian metric on the manifold $G(A)=|A|$, where $|\hat{g}|(X, Y) := \hat{g}(\hat{X})(\hat{Y})$;
\item[(3)] If $g$ is a metric on a $M$, then $|\hat{g}| = g$;
\item[(4)] If $\hat{g}$ is a metric on $A$, then $\widehat{|\hat{g}|} = \hat{g}$.
\end{enumerate}
\end{lem}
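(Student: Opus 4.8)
The plan is to regard (1) and (2) as carrying the real content — each asks us to check that a constructed object genuinely is a metric in the appropriate sense — and to derive (3) and (4) by unwinding the two defining formulas. The common tool is the correspondence recorded just above: derivations on a smooth algebra $A$ correspond to smooth vector fields on $G(A)=|A|$, and elements of $\hat{\Gamma}^*(A)$ correspond to smooth covariant vector fields on $G(A)$, in a way that is $A$-linear and intertwines the duality pairing $\hat{\Gamma}^*(A)\times\hat{\Gamma}(A)\to A$ with the ordinary pointwise pairing of a covector field against a vector field. Combining this with Theorem \ref{man alg duality} (so that $FG(A)=A$ and $GF(M)$ is identified with $M$ via $\theta_M$), a metric on $A$ becomes literally the same datum as a smooth, symmetric, $C^\infty$-bilinear, fibrewise nondegenerate $(0,2)$-tensor field on $G(A)$ of the same pointwise signature, and conversely.

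For (1): with $g$ a Lorentzian metric on $M$, define $\hat{g}$ by $\hat{g}(\hat{X})(\hat{Y}):=g(X,Y)$, where $X,Y$ are the vector fields on $GF(M)\cong M$ associated with $\hat{X},\hat{Y}$. That $\hat{g}(\hat{X})$ lies in $\hat{\Gamma}^*(F(M))$ and that $\hat{X}\mapsto\hat{g}(\hat{X})$ is $C^\infty(M)$-linear are immediate from $C^\infty(M)$-bilinearity of $g$ together with the fact that the derivation/vector-field correspondence is a module map; symmetry of $\hat{g}$ is symmetry of $g$. The one nonformal point is that $\hat{g}$ is an \emph{isomorphism} of modules (not merely a homomorphism): it is injective because $g$ is fibrewise nondegenerate (if $g(X,Y)=0$ for every $Y$, then $g_p(X_p,\cdot)=0$, hence $X_p=0$, at each $p$), and surjective because fibrewise nondegeneracy makes the smooth bundle map $v\mapsto g(v,\cdot)$ a fibrewise linear isomorphism, hence a smooth vector-bundle isomorphism with smooth inverse, so any covector field is $g(X,\cdot)$ for a smooth vector field $X$, whose associated derivation maps to it under $\hat{g}$. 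Finally, under the identification $T_pA\cong T_pM$ the induced bilinear form of $\hat{g}$ at $p$ is $g_p$, which has signature $(1,n-1)$, so $\hat{g}$ has Lorentz signature.

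For (2): with $\hat{g}$ a Lorentzian metric on $A$, set $|\hat{g}|(X,Y):=\hat{g}(\hat{X})(\hat{Y})$ for vector fields $X,Y$ on $G(A)$. Since $\hat{g}(\hat{X})(\hat{Y})\in A=C^\infty(|A|)$ and the assignment is $C^\infty(|A|)$-bilinear (as $\hat{g}$ is a module map and the correspondence is $A$-linear), $|\hat{g}|$ is automatically a smooth, symmetric $(0,2)$-tensor field, and tensoriality yields a well-defined fibrewise bilinear form at each $p$ that, under $T_p(G(A))\cong T_pA$, is the induced form of $\hat{g}$ at $p$. By the fact recorded just before the definition of an Einstein algebra, that form is nondegenerate of signature $(m,n-m)$ for some $m$; since $\hat{g}$ is Lorentzian, $(m,n-m)=(1,n-1)$ at every $p$, so $|\hat{g}|$ is a Lorentzian metric on $G(A)$. (If one prefers not to invoke that fact, fibrewise nondegeneracy can be re-derived directly: were the form at $p$ degenerate, there would be a nonzero $v\in T_pA$ it annihilates, and pulling back through the module isomorphism $\hat{g}$ a covector field whose value at $p$ does not kill $v$ gives a contradiction.)

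For (3) and (4): substitute the two formulas. Given $g$ on $M$, forming $\hat{g}$ by (1) and then $|\hat{g}|$ by (2) gives $|\hat{g}|(X,Y)=\hat{g}(\hat{X})(\hat{Y})=g(X,Y)$, so $|\hat{g}|=g$. Given $\hat{g}$ on $A$, forming $|\hat{g}|$ by (2) and then $\widehat{|\hat{g}|}$ by (1) — the latter applied to the manifold $G(A)$, whose function algebra $FG(A)$ is $A$ — gives $\widehat{|\hat{g}|}(\hat{X})(\hat{Y})=|\hat{g}|(X,Y)=\hat{g}(\hat{X})(\hat{Y})$, so $\widehat{|\hat{g}|}=\hat{g}$. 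I expect the main obstacle to be the bookkeeping in (1) and (2): pinning down that a module isomorphism $\hat{\Gamma}(A)\to\hat{\Gamma}^*(A)$ corresponds exactly to a fibrewise nondegenerate $(0,2)$-tensor field on $G(A)$, which rests on paracompactness (so that covector fields form the module dual to vector fields) and on smoothness of the inverse bundle map in the surjectivity step; leaning instead on the already-established pointwise signature of algebra metrics makes (2) essentially immediate and isolates the surjectivity half of (1) as the one step needing genuine work.
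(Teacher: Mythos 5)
Your proof is correct and follows essentially the same route as the paper's: parts (1) and (2) are established by transporting $g$ through the derivation/vector-field correspondence and checking module-isomorphism, symmetry, and pointwise signature, while (3) and (4) fall out by substituting the two defining formulas. The only difference is one of detail --- the paper dismisses the bijectivity of $\hat{g}$ and the identification of $\hat{\Gamma}^*$ with covector fields with ``one can easily verify,'' whereas you spell out the surjectivity via the bundle isomorphism $v\mapsto g(v,\cdot)$ and flag the reliance on the module-dual identification; this is a welcome filling-in rather than a different argument.
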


\begin{proof}

Let $g$ be a Lorentzian metric on $M$. One can easily verify that the map $\hat{g}:\hat{\Gamma}{(F(M))}\rightarrow\hat{\Gamma}^*(F(M))$ defined by $\hat{g}(\hat{X})(\hat{Y})=g(X,Y)$ is a symmetric module isomorphism, and therefore a metric on the smooth algebra $F(M)=C^\infty(M)$. It immediately follows from the bilinearity of $g$ that $\hat{g}$ is a module homomorphism; that $\hat{g}$ is bijective and symmetric follows from the fact that $g$ is non-degenerate and symmetric.

We also need to show that $\hat{g}$ has Lorentz signature. Let $p\in M$ and let $\xi_1, \ldots,\xi_n$ be an orthonormal basis (relative to the metric $g$) for the tangent space $T_pM$. Vectors at $p\in M$ can be naturally identified via \eqref{vector field correspondence} with elements of the tangent space $T_pF(M)$ to the algebra $F(M)=C^\infty(M)$. This identification and the definition of $\hat{g}$ immediately imply that $\hat{g}$ must have the same signature as $g$. So $\hat{g}$ is a metric of Lorentz signature on $F(M)$ and therefore (1) holds.
%
%
%
%
One argues in an analogous manner to demonstrate (2).

If $g$ is a metric on $M$ and $X$ and $Y$ are vector fields on $M$, then $|\hat{g}|(X, Y) = \hat{g}(\hat{X})(\hat{Y}) = g(X, Y)$.
Furthermore, if $\hat{g}$ is a metric on $A$ and $\hat{X}$ and $\hat{Y}$ are derivations on $A$, then $\widehat{|\hat{g}|}(\hat{X})(\hat{Y}) = |\hat{g}|(X, Y) = \hat{g}(\hat{X})(\hat{Y})$. This immediately implies (3) and (4).
\end{proof}

Lemma \ref{metrics} captures a sense in which metrics on manifolds and metrics on smooth algebras encode exactly the same information. Each kind of structure naturally induces the other. This lemma strongly suggests, therefore, that we will be able to recover a sense in which general relativity and the theory of Einstein algebras are equivalent theories. Recovering this sense will require us to define a category of models for the theory of Einstein algebras. In order to do this, we need to discuss the ``structure-preserving maps'' between Einstein algebras.

Let $A$ and $B$ be smooth algebras with $\psi: A \rightarrow B$ an algebra homomorphism. Let $q$ be a point in $|B|$ and let $\hat{X}_q\in T_qB$ be an element of the tangent space to $B$ at $q$. The \textit{pullback} of $\hat{X}_q$ along the homomorphism $\psi$ is the element $\psi^*(\hat{X}_q)$ of $T_{|\psi|(q)} A$ defined by its action $\psi^*(\hat{X}_q)(f) = \hat{X}_q(\psi(f))$ on arbitrary elements $f\in A$. One again uses the correspondence \eqref{vector field correspondence} between vectors at the point $|\psi|(q)$ in the manifold $G(A)$ and elements of $T_{|\psi|(q)} A$ to verify that indeed $\psi^*(\hat{X}_q)\in T_{|\psi|(q)} A$. The pullback also allows us to use a homomorphism between smooth algebras to transfer other structures between the algebras. In particular, if $\hat{g}$ is a metric on $A$, the \textit{pushforward} $\psi_*(\hat{g})$ of $\hat{g}$ to $B$ is the map $\hat{g}:\hat{\Gamma}(B) \times \hat{\Gamma}(B) \rightarrow B$ defined by
$$
\psi_*(\hat{g})(\hat{X}, \hat{Y})(p) = \hat{g}(\psi^* (\hat{X}_p),\psi^* (\hat{Y}_p))
$$
for derivations $\hat{X}$ and $\hat{Y}$ on $B$. We now have the machinery to define the structure-preserving maps between Einstein algebras. If $(A, \hat{g})$ and $(B, \hat{g}')$ are Einstein algebras, an algebra homomorphism $\psi: A\rightarrow B$ is an \textit{Einstein algebra homomorphism} if it satisfies $\psi_*(\hat{g}) = \hat{g}'$. Einstein algebra homomorphisms are required to preserve both algebraic structure and the metric structure on the algebras.

We can now define the category of models $\textbf{EA}$ for the theory of Einstein algebras. The objects of the category $\textbf{EA}$ are Einstein algebras $(A, \hat{g})$, and the arrows are Einstein algebra homomorphisms. Our aim is to prove that $\textbf{EA}$ and $\textbf{GR}$ are dual categories. We first isolate two facts about the relationship between algebra homomorphisms and smooth maps in the following lemma.
\begin{lem} \label{lem:pushpull}
Let $\phi:M\rightarrow N$ be a smooth map between manifolds $M$ and $N$, and $\psi:A\rightarrow B$ be an algebra homomorphism between smooth algebras $A$ and $B$. Then the following both hold:
\begin{enumerate}
\item[(1)] $\widehat{\varphi_*(X_p)} = \hat{\varphi}^*(\hat{X}_p)$ for every vector $X_p$ at the point $p\in M$;
\item[(2)] $\psi^*(\hat{X}_q) = \widehat{|\psi|_*(X_q)}$ for every $\hat{X}_q \in T_qB$.
\end{enumerate}
\end{lem}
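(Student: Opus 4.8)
The plan is to prove both identities by directly unwinding the definitions involved: the usual pushforward of a tangent vector along a smooth map, the pullback of an algebra tangent vector along a homomorphism (defined just above the lemma), the correspondence \eqref{vector field correspondence} between vectors on $G(A)=|A|$ and elements of the algebraic tangent spaces $T_pA$, and the identifications afforded by Theorem \ref{man alg duality}. The key structural observation is that, once these identifications are in place, both sides of each equation are linear functionals on the \emph{same} vector space --- in (1) on $F(N)=C^\infty(N)$, regarded at the point $\phi(p)$; in (2) on $A$, regarded at the point $|\psi|(q)$ --- so it suffices to check that they agree when evaluated on an arbitrary smooth function.

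For (1), fix $f\in F(N)$. By the definition of the pullback along the homomorphism $\hat\phi=F(\phi)$, together with $\hat\phi(f)=f\circ\phi$, we have $\hat\phi^*(\hat X_p)(f)=\hat X_p(\hat\phi(f))=\hat X_p(f\circ\phi)$, and the correspondence \eqref{vector field correspondence} rewrites this as $X_p(f\circ\phi)$. On the other side, $\widehat{\phi_*(X_p)}(f)=\phi_*(X_p)(f)=X_p(f\circ\phi)$, again by \eqref{vector field correspondence} together with the usual definition of the pushforward of a manifold vector. Since $f$ was arbitrary, the two functionals coincide. (At the outset one should also note that, under the identification $|\hat\phi|=\phi$ established in Theorem \ref{man alg duality}, both sides are indeed elements of $T_{\phi(p)}F(N)$, so the asserted equality is between objects of the same type.)

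For (2), fix $f\in A$. By the definition of the pullback along $\psi$ and the correspondence \eqref{vector field correspondence}, $\psi^*(\hat X_q)(f)=\hat X_q(\psi(f))=X_q(\psi(f))$, where $\psi(f)$ is viewed as a smooth scalar function on $G(B)=|B|$ via the identification $B\cong FG(B)$. Meanwhile $\widehat{|\psi|_*(X_q)}(f)=|\psi|_*(X_q)(f)=X_q(f\circ|\psi|)$, by the same correspondence and the definition of the manifold pushforward. The two expressions agree precisely because $\psi(f)=f\circ|\psi|$ as elements of $C^\infty(|B|)$ --- which is exactly the content of the naturality square \eqref{etanaturality} for $\eta$, equivalently the identification of $\psi$ with $\widehat{|\psi|}$ recorded after Theorem \ref{man alg duality}. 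Hence $\psi^*(\hat X_q)=\widehat{|\psi|_*(X_q)}$.

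There is no substantive obstacle here: the argument is a bookkeeping exercise. The only thing requiring genuine care is keeping the several identifications straight --- a manifold versus its dual space of points, an algebra versus its algebra of functions on points, a homomorphism $\psi$ versus $\widehat{|\psi|}$, a smooth map $\phi$ versus $|\hat\phi|$, and a tangent vector versus the corresponding element of an algebraic tangent space --- and, in particular, confirming that the two sides of each displayed equation are being interpreted as elements of the same tangent space before they are compared. The near-perfect symmetry between (1) and (2) reflects the duality of Theorem \ref{man alg duality}, so one should expect, and does find, that the two arguments mirror each other.
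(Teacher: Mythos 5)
Your proof is correct and is essentially identical to the paper's: both parts are established by evaluating each side on an arbitrary function and unwinding the definitions of the pushforward, the algebraic pullback, the correspondence \eqref{vector field correspondence}, and the identification $\psi(f)=f\circ|\psi|$. Your explicit appeal to \eqref{etanaturality} for that last identification is a slightly more careful justification than the paper's ``by the definition of $|\psi|$,'' but the argument is the same.
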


\begin{proof}
Let $X_p$ be a vector at $p$ in the manifold $M$ and $f\in C^\infty(N)=F(N)$. We demonstrate that (1) holds simply by computing the following.
$$
\widehat{\varphi_*(X_p)}(f) =  \varphi_*(X_p)(f)
= X_p(f \circ \phi)
= X_p(\hat{\varphi}(f))
= \hat{X}_p(\hat{\varphi}(f)) = \hat{\varphi}^*(\hat{X}_p)(f)
$$
The first and fourth equalities follow from the correspondence \eqref{vector field correspondence}, the second equality from the standard geometric definition of the pushforward $\phi_*$, the third from the definition of the map $\hat{\phi}$, and the fifth from the algebraic definition of the pullback $\hat{\phi}^*$.

The argument for (2) is perfectly analogous. Let $q\in|B|$ be a point with $\hat{X}_q\in T_q B$ and $f\in A$. We compute the following.
$$
\psi^*(\hat{X}_q)(f) = \hat{X}_q(\psi(f)) = X_q(\psi(f)) = X_q(f \circ |\psi|)
= |\psi|_*(X_q)(f) = \widehat{|\psi|_*(X_q)}(f)
$$
The first equality follows from the algebraic definition of the pullback $\psi^*$, the second and fifth follow from the correspondence \eqref{vector field correspondence}, the third by the definition of $|\psi|$, and the fourth by the standard geometric definition of the pushforward $|\psi|_*$.
\end{proof}

In conjunction with Theorem \ref{man alg duality}, Lemmas \ref{metrics} and \ref{lem:pushpull} allow us to define a pair of translations between the framework of Einstein algebras and the standard framework of general relativity. We first consider the natural way to translate relativistic spacetimes into Einstein algebras. We call this translation $J$ and define it as follows.
\begin{itemize}
\item Given a relativistic spacetime $(M, g)$, $J(M, g)=(C^\infty(M), \hat{g})$ is the Einstein algebra with underlying smooth algebra $C^\infty(M)$ and metric $\hat{g}$ defined in Lemma \ref{metrics}.
\item Given an isometry $\phi: (M, g)\rightarrow (M', g')$, $J(\phi)$ is the map $\hat{\phi}:C^\infty(M')\rightarrow C^\infty (M)$.
\end{itemize}
The translation $J$ is perfectly analogous to the contravariant functor $F$ described above. Indeed, as with $F$ we have the following simple result about $J$.

\begin{lem}
$J: \textbf{GR} \rightarrow \textbf{EA}$ is a contravariant functor.
\end{lem}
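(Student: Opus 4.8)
The plan is to check the three defining conditions for a contravariant functor: that $J$ carries each object of $\textbf{GR}$ to an object of $\textbf{EA}$, that it carries each arrow of $\textbf{GR}$ to an arrow of $\textbf{EA}$ with source and target reversed, and that it preserves identities and reverses composition. Because $J$ does to the underlying smooth manifolds and smooth maps exactly what $F$ does---$J(M,g)$ has underlying algebra $F(M)$, and $J(\phi)$ is the homomorphism $F(\phi)=\hat\phi$---the first and last conditions will follow almost verbatim from Proposition \ref{smoothalgebra} and Lemma \ref{Fcontra}. The only real content is showing that $J(\phi)$ respects the metric structure, i.e.\ that it is not merely an algebra homomorphism but an Einstein algebra homomorphism.

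For objects: given a relativistic spacetime $(M,g)$, Proposition \ref{smoothalgebra} says $C^\infty(M)$ is a smooth algebra and Lemma \ref{metrics}(1) says the induced $\hat g$ is a metric of Lorentz signature on it, so $J(M,g)=(C^\infty(M),\hat g)$ is an Einstein algebra. For arrows, let $\phi:(M,g)\to(M',g')$ be an isometry, so $\phi^*(g')=g$; by Lemma \ref{Fcontra}, $\hat\phi:C^\infty(M')\to C^\infty(M)$ is already an algebra homomorphism, so it remains only to verify the metric condition $\hat\phi_*(\hat g')=\hat g$, where $\hat g'$ and $\hat g$ denote the metrics of $J(M',g')$ and $J(M,g)$ respectively. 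This is the step I expect to demand the most care, since it requires threading together the definition of the pushforward of an algebra metric with Lemmas \ref{metrics} and \ref{lem:pushpull} while keeping the directions of pullback and pushforward straight.

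Concretely, I would fix derivations $\hat X,\hat Y$ on $C^\infty(M)$ and a point $p\in|C^\infty(M)|=M$, and compute
$$
\hat\phi_*(\hat g')(\hat X,\hat Y)(p)=\hat g'\big(\hat\phi^*(\hat X_p),\,\hat\phi^*(\hat Y_p)\big)
$$
directly from the definition of the pushforward. Lemma \ref{lem:pushpull}(1) identifies $\hat\phi^*(\hat X_p)$ with $\widehat{\phi_*(X_p)}$, the algebraic counterpart of the ordinary pushed-forward tangent vector $\phi_*(X_p)$ at $\phi(p)\in M'$; and Lemma \ref{metrics} identifies $\hat g'$ evaluated on such algebraic vectors with $g'$ evaluated on the corresponding manifold vectors. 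Hence the right-hand side equals $g'\big(\phi_*(X_p),\phi_*(Y_p)\big)=(\phi^*g')(X_p,Y_p)=g(X_p,Y_p)$, which by Lemma \ref{metrics} again equals $\hat g(\hat X,\hat Y)(p)$. Since $p$, $\hat X$, and $\hat Y$ were arbitrary, $\hat\phi_*(\hat g')=\hat g$, so $J(\phi)$ is an Einstein algebra homomorphism from $J(M',g')$ to $J(M,g)$.

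Finally, since $J(1_M)=F(1_M)=1_{C^\infty(M)}$ and $J(\psi\circ\phi)=F(\psi\circ\phi)=F(\phi)\circ F(\psi)=J(\phi)\circ J(\psi)$ by Lemma \ref{Fcontra}, $J$ preserves identities and reverses composition. Together with the two previous paragraphs, this shows $J:\textbf{GR}\to\textbf{EA}$ is a contravariant functor. The metric-compatibility calculation is the only place where anything genuinely new occurs, and even there the argument is a routine diagram chase once the lemmas on metrics and on the push/pullback of vectors are in hand.
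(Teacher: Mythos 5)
Your proof is correct and follows essentially the same route as the paper's: objects are handled via Proposition \ref{smoothalgebra} and Lemma \ref{metrics}(1), arrows via Lemma \ref{Fcontra} plus the same chain of equalities $\hat\phi_*(\hat g')(\hat X_p,\hat Y_p)=\hat g'(\hat\phi^*(\hat X_p),\hat\phi^*(\hat Y_p))=g'(\phi_*(X_p),\phi_*(Y_p))=(\phi^*g')(X_p,Y_p)=g(X_p,Y_p)=\hat g(\hat X_p,\hat Y_p)$ using Lemmas \ref{lem:pushpull} and \ref{metrics}, and functoriality of identities and composition is inherited from $F$. Nothing further is needed.
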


\begin{proof}
If $(M, g)$ is an object in $\textbf{GR}$, then it immediately follows that $J(M, g)$ is an object in $\textbf{EA}$. Proposition \ref{smoothalgebra} implies that $C^{\infty}(M)$ is a smooth algebra and Lemma \ref{metrics} implies that $\hat{g}$ is a metric of Lorentz signature on $C^\infty(M)$, so $J(M, g)$ is an Einstein algebra.

Now let $\phi: (M, g) \rightarrow (M', g')$ be an isometry. We need to show that $J(\phi)=\hat{\phi}:C^\infty(M')\rightarrow C^\infty(M)$ is an Einstein algebra homomorphism. Since $\phi$ is a smooth map,  Lemma \ref{Fcontra} guarantees that $\hat{\phi} : C^{\infty}(M') \rightarrow C^{\infty}(M)$ is an algebra homomorphism. It remains to show that $\hat{\varphi}_*(\hat{g}') = \hat{g}$. Let $\hat{X}$ and  $\hat{Y}$ be derivations on $C^\infty(M)$. We compute that
\begin{align*}
\hat{\varphi}_*(\hat{g}')(\hat{X}_p, \hat{Y}_p) &= \hat{g}'(\hat{\varphi}^*(\hat{X}_p), \hat{\varphi}^*(\hat{Y}_p))\\
&=\hat{g}'(\widehat{\varphi_*(X_p)}, \widehat{\varphi_*(Y_p)}) \\
&=g'(\varphi_*(X_p), \varphi_* (Y_p))\\
&=\varphi^*(g')(X_p, Y_p)= g(X_p, Y_p) = \hat{g}(\hat{X}_p, \hat{Y}_p)
\end{align*}
for every point $p\in M=|C^\infty(M)|$. The first equality follows from the definition of $\hat{\phi}_*$, the second from Lemma \ref{lem:pushpull}, the third from the definition of $\hat{g}'$, the fourth from the definition of $\phi^*$, the fifth since $\phi$ is an isometry, and the sixth from the definition of $\hat{g}$. This implies that $\hat{\phi}_*(\hat{g}')=\hat{g}$ and therefore that $J(\phi)=\hat{\phi}$ is an arrow $J(M', g')\rightarrow J(M, g)$. One easily verifies that $J$ preserves identities and reverses composition.
\end{proof}

There is also a way to translate from the framework of general relativity into the framework of Einstein algebras. We call this translation $K$ and define it as follows.
\begin{itemize}
\item Given an Einstein algebra $(A, \hat{g})$, $K(A, \hat{g})=(|A|, |\hat{g}|)$ is the relativistic spacetime with underlying manifold $|A|=G(A)$ and metric $|\hat{g}|$ defined in Lemma \ref{metrics}.
\item Given an Einstein algebra homomorphism $\psi:(A, \hat{g})\rightarrow (A', \hat{g}')$, $K(\psi)$ is the map $|\psi|:|A'| \rightarrow |A|$.
\end{itemize}
The translation $K$ is perfectly analogous to the contravariant functor $G$ described above. And again, we have the following result.

\begin{lem}
$K: \textbf{EA} \rightarrow \textbf{GR}$ is a contravariant functor.
\end{lem}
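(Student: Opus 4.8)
The plan is to mirror the proof that $J$ is a contravariant functor, exchanging the roles of manifolds and algebras. Three things must be checked: that $K$ sends objects of $\textbf{EA}$ to objects of $\textbf{GR}$, that it sends arrows to arrows, and that it preserves identities while reversing composition. The first and third are essentially immediate given the earlier results; the second contains the one computation of substance.

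For objects: given an Einstein algebra $(A,\hat g)$, the underlying smooth algebra $A$ is four-dimensional, so the construction in Section~\ref{algman} yields $G(A)=|A|$ as a smooth four-dimensional (Hausdorff, paracompact) manifold. By Lemma~\ref{metrics}(2), $|\hat g|$ is then a Lorentzian metric on $G(A)=|A|$. Hence $K(A,\hat g)=(|A|,|\hat g|)$ is a relativistic spacetime, i.e., an object of $\textbf{GR}$.

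For arrows: let $\psi:(A,\hat g)\to(A',\hat g')$ be an Einstein algebra homomorphism. Since $\psi$ is in particular an algebra homomorphism between smooth algebras, Lemma~\ref{Gcontra} guarantees that $K(\psi)=|\psi|=G(\psi):|A'|\to|A|$ is a smooth map between $G(A')$ and $G(A)$. It remains to show $|\psi|$ is an isometry from $(|A'|,|\hat g'|)$ to $(|A|,|\hat g|)$, i.e., that $|\psi|^*(|\hat g|)=|\hat g'|$. Fixing $q\in|A'|$ and vectors $X_q,Y_q$ at $q$ (with corresponding elements $\hat X_q,\hat Y_q$ of $T_qA'$ via \eqref{vector field correspondence}), I would compute, using in turn the geometric definition of the pullback of a metric, the definition of $|\hat g|$ from Lemma~\ref{metrics}(2), Lemma~\ref{lem:pushpull}(2), the definition of the pushforward $\psi_*(\hat g)$, the hypothesis $\psi_*(\hat g)=\hat g'$, and the definition of $|\hat g'|$:
\begin{align*}
|\psi|^*(|\hat g|)(X_q, Y_q)
&= |\hat g|\bigl(|\psi|_*(X_q),\, |\psi|_*(Y_q)\bigr)
= \hat g\bigl(\widehat{|\psi|_*(X_q)},\, \widehat{|\psi|_*(Y_q)}\bigr) \\
&= \hat g\bigl(\psi^*(\hat X_q),\, \psi^*(\hat Y_q)\bigr)
= \psi_*(\hat g)(\hat X, \hat Y)(q) \\
&= \hat g'(\hat X, \hat Y)(q)
= |\hat g'|(X_q, Y_q).
\end{align*}
Since $q$, $X_q$, $Y_q$ were arbitrary, $|\psi|^*(|\hat g|)=|\hat g'|$, so $|\psi|$ is an isometry, hence an arrow $K(A',\hat g')\to K(A,\hat g)$ in $\textbf{GR}$.

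Finally, that $K$ preserves identities and reverses composition is inherited directly from the corresponding facts about $G$ in Lemma~\ref{Gcontra}, since $K$ acts on arrows exactly as $G$ does; hence $K:\textbf{EA}\to\textbf{GR}$ is a contravariant functor. The only real obstacle is bookkeeping in the displayed computation --- keeping straight the identification of vectors on $G(A)$ with elements of the tangent spaces $T_pA$, and the several maps all written $\hat g$ --- but each step is licensed by a definition or by Lemmas~\ref{metrics} and~\ref{lem:pushpull}, so no genuinely new idea is needed beyond what was used for $J$.
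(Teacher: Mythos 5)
Your proposal is correct and follows essentially the same route as the paper: objects via Lemma \ref{metrics}(2), smoothness of $|\psi|$ via Lemma \ref{Gcontra}, and the same chain of equalities (geometric pullback, definition of $|\hat g|$, Lemma \ref{lem:pushpull}(2), definition of $\psi_*$, the hypothesis $\psi_*(\hat g)=\hat g'$, definition of $|\hat g'|$) to show $|\psi|$ is an isometry. The only cosmetic difference is that you compute directly with $|\hat g|$ where the paper first passes to the identification $|\hat g|=g$ from Lemma \ref{metrics}(3), which if anything makes your version slightly cleaner.
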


\begin{proof}
If $(A, \hat{g})$ is an object in $\textbf{EA}$, then it immediately follows that $K(A, \hat{g})$ is an object in $\textbf{GR}$. Indeed, we have already seen that $G(A)=|A|$ is a smooth manifold, and Lemma~\ref{metrics} implies that $|\hat{g}|$ is a metric on $|A|$, so $(|A|, |\hat{g}|)$ is a relativistic spacetime.

Now let $\psi: (A, \hat{g}) \rightarrow (A', \hat{g}')$ be an Einstein algebra homomorphism. Lemma \ref{Gcontra} guarantees that $K(\varphi) =  |\psi|: |A'| \rightarrow |A|$ is a smooth map. It remains to show that $|\psi|^*(|\hat{g}|) = |\hat{g}'|$. By Lemma \ref{metrics}, it will suffice to show that $|\psi|^*(g)=g'$. We let $X$ and $Y$ be vector fields on $|A'|$ and compute that
\begin{align*}
|\psi|^*(g)(X_p, Y_p) &= g(|\psi|_*(X_p), |\psi|_*(Y_p))\\
&= \hat{g}(\widehat{|\psi|_*(X_p)}, \widehat{|\psi|_*(Y_p)}) \\
&= \hat{g}(\psi^*(\hat{X}_p), \psi^*(\hat{Y}_p))\\
&= \psi_*(\hat{g})(\hat{X}_p, \hat{Y}_p)= \hat{g}'(\hat{X}_p, \hat{Y}_p)= g'(X_p, Y_p )
\end{align*}
for every point $p\in |A'|$. The first equality follows from the definition of $|\psi|^*$, the second from the definition of $\hat{g}$, the third from Lemma \ref{lem:pushpull}, the fourth from the definition of $\psi_*$, the fifth since $\psi$ is an Einstein algebra homomorphism, and the sixth from the definition of $\hat{g}'$. This implies that $|\psi|$ is an isometry and therefore an arrow $K(A', \hat{g}')\rightarrow K(A, \hat{g})$. One again easily verifies that $K$ preserves identities and reverses composition.
\end{proof}

We now have the resources necessary to prove our main result. The contravariant functors $J$ and $K$ realize a duality between the category of models for the theory of Einstein algebras and the category of models for general relativity. This result essentially follows as a corollary to Theorem \ref{man alg duality} along with parts (3) and (4) of Lemma \ref{metrics}.

\begin{thm}\label{thm:duality}
The categories $\textbf{EA}$ and $\textbf{GR}$ are dual.
\end{thm}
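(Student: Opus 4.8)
The plan is to re-use the very same two families of maps, $\theta: 1_{\textbf{SmoothMan}} \Rightarrow G\circ F$ and $\eta: 1_{\textbf{SmoothAlg}} \Rightarrow F\circ G$, that witnessed the duality in Theorem \ref{man alg duality}, and to upgrade them to natural isomorphisms $\theta: 1_{\textbf{GR}} \Rightarrow K\circ J$ and $\eta: 1_{\textbf{EA}} \Rightarrow J\circ K$. Since $J$ and $K$ act on underlying manifolds and algebras exactly as $F$ and $G$ do, and on underlying morphisms by $J(\phi)=\hat{\phi}$ and $K(\psi)=|\psi|$, all the point-set and algebraic bookkeeping is already done; what remains is to check that the components $\theta_M$ and $\eta_A$ are isomorphisms in the \emph{enriched} categories $\textbf{GR}$ and $\textbf{EA}$ — that is, that they respect the metric structure — and that the naturality squares still commute when the arrows are restricted to isometries and Einstein algebra homomorphisms.

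First I would unwind the composites. For a relativistic spacetime $(M,g)$ we have $KJ(M,g) = K(C^{\infty}(M),\hat{g}) = (|C^{\infty}(M)|, |\hat{g}|)$, and Theorem \ref{man alg duality} already tells us that $\theta_M: M \to |C^{\infty}(M)|$ is a diffeomorphism identifying the underlying manifolds. So it suffices to show $\theta_M$ is an isometry, i.e. $\theta_M^*(|\hat{g}|) = g$. Chasing the definitions of $\hat{g}$ from Lemma \ref{metrics}(1) and of $|\cdot|$ from Lemma \ref{metrics}(2), this equality is precisely the content of Lemma \ref{metrics}(3) read under the canonical identification $\theta_M$. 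Hence $\theta_M$ is an invertible isometry, i.e. an isomorphism $(M,g) \to KJ(M,g)$ in $\textbf{GR}$. Symmetrically, for an Einstein algebra $(A,\hat{g})$ we get $JK(A,\hat{g}) = J(|A|,|\hat{g}|) = (C^{\infty}(|A|), \widehat{|\hat{g}|})$, and $\eta_A: A \to C^{\infty}(|A|)$ is an algebra isomorphism by Theorem \ref{man alg duality}; one checks that transporting $\hat{g}$ along $\eta_A$ (via the pushforward) yields $\widehat{|\hat{g}|}$, which is exactly Lemma \ref{metrics}(4). So $\eta_A$ is an Einstein algebra homomorphism that is also an algebra isomorphism, hence an isomorphism $(A,\hat{g}) \to JK(A,\hat{g})$ in $\textbf{EA}$.

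For naturality, given an isometry $\phi:(M,g)\to(N,h)$ I must verify $KJ(\phi)\circ\theta_M = \theta_N\circ\phi$, and given an Einstein algebra homomorphism $\psi$ the analogous square for $\eta$. But because an isometry is in particular a smooth map, an Einstein algebra homomorphism is in particular an algebra homomorphism, and $J,K$ agree with $F,G$ on the underlying data, these are literally the naturality squares for $\theta$ and $\eta$ already established in the proof of Theorem \ref{man alg duality}. Since $J$ and $K$ are contravariant functors (proved above) with $KJ\cong 1_{\textbf{GR}}$ and $JK\cong 1_{\textbf{EA}}$, the categories $\textbf{EA}$ and $\textbf{GR}$ are dual.

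The only genuine work — and hence the main obstacle — lies in the metric-compatibility verifications: one must keep careful track of the several distinct objects all written with a hat (the metric $\hat{g}$ as a module isomorphism, the induced bilinear maps it is abused to denote, and its transports $|\hat{g}|$, $\widehat{|\hat{g}|}$, $\psi_*(\hat{g})$), and confirm that the equalities in Lemma \ref{metrics}(3)--(4), stated there "under the canonical identifications," do in fact assert that $\theta_M$ pulls $|\hat{g}|$ back to $g$ and that $\eta_A$ pushes $\hat{g}$ forward to $\widehat{|\hat{g}|}$. Once that identification of the isomorphisms with pullback/pushforward of metrics is made explicit, everything else is formal and the theorem follows as a corollary of Theorem \ref{man alg duality}.
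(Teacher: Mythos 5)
Your proposal is correct and matches the paper's own argument: both reuse the natural transformations $\theta$ and $\eta$ from Theorem \ref{man alg duality}, observe that the naturality squares carry over unchanged since $J,K$ act on underlying data as $F,G$ do, and invoke parts (3) and (4) of Lemma \ref{metrics} to upgrade the components to isometries and Einstein algebra isomorphisms respectively. The paper is somewhat terser (it explicitly cites only part (4) for $\eta$ and declares the $\theta$ case ``perfectly analogous''), but the substance is identical.
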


\begin{proof}
The proof exactly mirrors the proof of Theorem \ref{man alg duality}. We again show that the families of maps $\eta:1_{\textbf{EA}}\Rightarrow J\circ K$ and $\theta:1_{\textbf{GR}}\Rightarrow K\circ J$ defined in equations \eqref{thetadefinition} and \eqref{etadefinition} are natural isomorphisms. It follows from Theorem \ref{man alg duality} that the naturality squares  for $\eta$ and $\theta$ commute, so we need only check that the components of $\eta$ and $\theta$ are isomorphisms.

Let $(A, \hat{g})$ be an object in $\textbf{EA}$ and consider the component $\eta_{(A, \hat{g})}:A\rightarrow C^\infty(|A|)$. We have already seen in Theorem \ref{man alg duality} that $\eta_{(A, \hat{g})}$ is an algebra isomorphism. In addition, part (4) of Lemma \ref{metrics} implies that $\eta_{(A, \hat{g})}$ preserves the metric and therefore is an isomorphism between Einstein algebras. A perfectly analogous argument demonstrates that the components $\theta_{(M, g)}$ are isomorphisms between relativistic spacetimes.
%
%
%
%
\end{proof}

\citet{GerochEA} goes on to define other structures, analogous to, for instance, tensor fields and covariant derivative operators, in purely algebraic terms, using strategies similar to those used here to define derivations and metrics.  With this machinery, he argues, one can express any equation one likes, including Einstein's equation and various matter field equations, in algebraic terms.  In this way, one may proceed to do relativity theory using Einstein algebras and structures defined on them, in much the same way that one would using Lorentzian manifolds.  The functors $J$ and $K$, meanwhile, along with the results proved and methods developed here, provide a way of translating between equations relating tensor fields on a Lorentzian manifold $(M,g)$ and the corresponding structures defined on the Einstein algebra $J(M,g)$.  Moreover, we have a strong sense in which $J$ and $K$ preserve any possible empirical structure associated with general relativity, on either formulation, since any of the empirical content of general relativity on Lorentzian manifolds will be expressed using invariant geometrical structures such as curves, tensor fields, etc. or their algebraic analogues, and it is precisely this sort of structure that $J$ and $K$ preserve.

\section{Conclusion}

Theorem \ref{thm:duality} establishes a sense in which the Einstein algebra formalism is equivalent to the standard formalism for general relativity.  This sense of equivalence captures the idea that, on a natural standard of comparison, the two theories have precisely the same mathematical structure---and thus, we claim, the same capacities to represent physical situations.  Indeed, our proof of Theorem \ref{thm:duality} makes precise how any given model of one of the formulations may be transformed into a model of the other.  This transformation is ``loss-less'' in the sense that one can then recover the original model of the first formulation up to isomorphism. In particular, we do not find a compelling sense in which one of the formulations exhibits ``excess structure'' that the other eliminates.  Insofar as one wants to associate these two formalisms with ``substantivalist'' and ``relationist''---or at least, non-substantivalist---approaches to spacetime, it seems that we have a kind of equivalence between different metaphysical views about spatiotemporal structure.

Of course, it remains open to the person who wants to give these formalisms a metaphysical significance to say that one of them is more fundamental than the other.  In a sense, this was Earman's original proposal: an Einstein algebra, he believed, would give an intrinsic characterization of what a family of isometric spacetimes had in common, and in this sense, was more fundamental than any particular representation of the algebra.  Theorem \ref{thm:duality}, then, captures a sense in which the converse is equally true: a given relativistic spacetime may be thought of as offering an intrinsic characterization of what a family of isometric Einstein algebras have in common.  For our part, we see no no reason to choose between these approaches, at least in the absence of new physics that shows how one bears a closer relationship to future theories.\footnote{That new physics of the relevant sort is not implausible: for instance, \citet{Heller+Sasin1,Heller+Sasin2} have argued that Einstein algebras admit a natural generalization that allows one to treat classes of singularities that (they claim) are poorly treated by more traditional geometric methods.  See \citet{bain2003} for a philosophical discussion of these arguments and their significance.}  Both encode precisely the same physical facts about the world, in somewhat different languages.  It seems far more philosophically interesting to recognize that the world may admit of such different, but equally good, descriptions than to argue about which approach is primary.

\bibliographystyle{apalike}
\bibliography{biblio/gaugeMaster,biblio/masterbib}
\end{document}